\documentclass[11pt]{article}

\usepackage{amsmath,amssymb,amsthm}
\usepackage{algorithm}
\usepackage{algorithmic}
\usepackage{geometry}
\usepackage{tikz}
\usetikzlibrary{calc}
\usepackage{hyperref}

\newtheorem{theorem}{Theorem}
\newtheorem{lemma}[theorem]{Lemma}

\newtheorem*{remark}{Remark}

\title{A Simple Algorithm for Clustering Discrete Distributions}
\author{Pradipta Mitra}
\date{}

\begin{document}

\maketitle

\paragraph{Preface.}
This article is excerpted from Chapter~5 of the author's PhD thesis \cite{mitra08thesis}.
The decision to make it available as a standalone article was prompted by the recent discovery
that the content of this chapter has been cited a few times \cite{neumann18, seifchen24}.

\begin{abstract}
  We propose a simple, projection-based algorithm for clustering mixtures of discrete (Bernoulli) distributions. Unlike previous approaches that rely on coordinate-specific ``combinatorial
  projections,'' our algorithm is rotationally invariant and works by projecting samples onto approximate centers obtained via a $k$-means computation on the best rank-$k$ approximation of
  the data matrix. This resolves a conjecture of McSherry on the existence of such geometric algorithms for discrete distributions. The same algorithm also applies to continuous distributions
   such as high-dimensional Gaussians, providing a unified approach across distribution types. We prove that the algorithm succeeds under a natural separation condition on the cluster
  centers.
  \end{abstract}

\section{Introduction}

In this paper, we propose a projection-based, rotationally invariant algorithm for
clustering mixture of distributions, including discrete distributions. This resolves a
conjecture of McSherry \cite{mcsherry01}. It also gives first clustering algorithms for a class of discrete
distributions previously not dealt with, and gives a natural algorithm that works for
wide class of discrete and continuous distributions.

We are concerned here with probabilistic mixture models. We start with $k$ ``simple'' probability distributions $D_1, D_2, \ldots, D_k$ where each is a distribution on $\mathbb{R}^n$. With
each $D_r$ we can associate a center $\mu_r$ of the distribution, defined by $\mu_r = E_{D_r}(v)$. A
mixture of these probability distributions is a density of the form $w_1 D_1 + w_2 D_2 +
\ldots + w_k D_k$, where the $w_i$ are fixed non-negative reals summing to 1. A clustering
question in this case would then be: What is the required condition on the probability distributions so that given a collection of vectors generated from the mixture, we
can group the vectors into $k$ clusters, where each cluster consists precisely of vectors
picked according to one of the component distributions of the mixture?

The ``planted partition model'' we have been studying so far is a special case of the
mixture model, except for the added requirement of symmetry. Symmetry only
adds very mild dependence to the model, and can be handled easily. Consequently,
we will restrict ourselves to mixture models. We present and analyze a very natural
algorithm that works for quite general mixture models. The algorithm is completely
``geometric'', in the sense that it is not coordinate specific in any way, is based
on ordinary projections to natural subspaces, and is robust under rotation. Our
algorithm is the first such algorithm shown to work for discrete distributions. The
existence of such an algorithm was first conjectured by McSherry in \cite{mcsherry01}.

Why is this interesting? To elucidate some of these reasons, we need to introduce
the common techniques used in the literature and their limitations.

The method very commonly used to cluster mixture models (discrete and continuous) is ``spectral clustering'' --- which refers to a large class of algorithms where
the eigenvectors or singular vectors of $A$ are used in some fashion. Many works have
come out on spectral clustering for continuous distributions, which have focused on
high-dimensional gaussians and their generalizations \cite{achlioptas05, dasgupta07, kannan05spectral, vempala04}. On the other hand, the use of eigenvectors for discrete, graph-based models was pioneered by Boppana \cite{boppana87}. The essential aspects of many techniques used now first appeared in \cite{alon98, alon97}. These results were generalized by McSherry \cite{mcsherry01} and a large body of work followed (\cite{dasgupta07, dasgupta04} etc).

Despite their variety, these algorithms have a common first step. Starting with
well-known spectral norm bounds on $\|A - E(A)\|$ (e.g.\ \cite{vu05}, there are many others),
a by-now standard analysis can be used to show that the projection on best $k$-dimensional approximation would find a good approximate partitioning of the data.
The major effort, often, goes into ``cleaning up'' this approximate partitioning.

At this point the methods for discrete and continuous methods diverge. For
continuous models, spectral clustering is often followed by projecting new samples
on the spectral subspace obtained. Here is why, broadly speaking, this works: the
spectral norm bounds imply that the spectral subspace is a span of the real centers
of the data, plus some error. This error can be thought of as a (or a small number of)
vector(s) of small $\ell_2$ norm. Now, if we consider a new, random sample, this sample
will be almost orthogonal to the error vector(s) with high probability. Hence, it is
as if we were projecting to the span of centers, which is a good subspace to project
to, and the clustering is complete.

Unfortunately, this is simply not true for discrete distributions. This has resulted
in rather ad-hoc methods for cleaning up mixture of discrete distributions. A most
pertinent example would be ``combinatorial projections'', proposed by McSherry \cite{mcsherry01},
and all other algorithms for discrete distributions share the same central features
we are concerned with. Here, one uses the span of characteristic vectors of the
approximate clusters as the subspace to project down to. Though successful in
that particular context, the idea of combinatorial projections is unsatisfactory for a
number of inter-related reasons. First, such algorithms aren't robust under many
natural transformations of the data for which we expect the clustering to remain the
same. The most important case is perhaps rotation. It is natural to expect that if
all vectors of the data are rotated by the same rotation matrix, the clustering would
remain the same. Projective algorithms would continue to work in these cases, while
combinatorial methods might not. To take the case of ``combinatorial projection'',
the behavior of the algorithm becomes unpredictable if the vectors are rotated. A
related issue is this: the overriding idea used to cluster discrete distributions is that
the \emph{feature space} (in addition to the object space) has clearly delineated clusters.
One way to state this condition is that $E(A)$ can be divided into a small number
of sub matrices, each of which contains the same value for each entry. This is
implicit in graph-based models, as the objects and features are the same things ---
vertices. These results will extend to the case of rectangular matrices as long as some
generalization of that condition holds, but for general cases where the centers don't
necessarily have such strong structure it is not clear how to extend these ideas. Our
algorithm solves this problem.

We will focus on discrete distributions (i.e.\ Bernoulli distributions) where each
entry of a sample vector is an independent $0/1$ variable. Discrete distributions are the
``hard'' distributions in the present context. That our algorithm will work for many
usual continuous distributions, such as spherical or sphere-like high-dimensional
gaussians will not be very hard to see.

This paper is organized as follows. In Section~\ref{sec:model} we present the basic model
and assumptions, our results and review of more relevant literature. Next, in Section~\ref{sec:algorithm} the algorithm is presented. Section~\ref{sec:analysis} consists of the technical results that prove
our bounds.

\section{Model}\label{sec:model}

There are $k$-centers $\mu_r$, such that $0 \le \mu_r(i) \le \sigma^2 \le 1$ for all $r \in [k]$, $i \in [n]$, and some
$\sigma^2 \ge \frac{\log^6 n}{n}$. Each center defines a probability distribution on $\mathbb{R}^n$: an $n$-dimensional
sample $v$ from this distribution is generated by setting $v(i) = 1$ with probability
$\mu_r(i)$ and $0$ otherwise, independently for all $i \in [n]$.

With each distribution we associate weight $w_r \ge 0$, such that $\sum_{r \in [k]} w_r = 1$. The
data matrix is generated in the following way. For each distribution, a set $T_r$ of $w_r m$
samples are chosen from it, adding up to $\sum_{r \in [k]} w_r m = m$ total samples. The $m$
samples are arranged as columns of a $n \times m$ matrix $A$, which is presented as the
data. We will use $A$ to mean both the matrix, and the set of vectors that are rows of
$A$, where the particular usage will be clear from the context. $E(A)$ is defined by rule:
$A_i \in T_r$ then $(E(A))_i = \mu_r$ ($M_i$ is the $i$th column of $M$). The algorithmic problem is:
Given $A$, can we find the clusters $T_1, T_2, \ldots, T_k$?

For successful clustering, we need to assume the following:

\medskip
\noindent\textbf{Separation condition:} We assume, for all $r, s \in [k]$, $r \ne s$
\begin{equation}\label{eq:separation}
\|\mu_r - \mu_s\|^2 \ge 8100 c k \sigma^2 \frac{1}{w_{\min}} \left(1 + \frac{n}{m} + \log m\right)
\end{equation}
for some constant $c$.

The following can be proved using large deviation inequalities (for example, Theorem~\ref{thm:bernstein}). For a sample $v \in T_r$, with probability $1 - \frac{1}{m^4}$,
\begin{equation}\label{eq:deviation}
|(v - \mu_r) \cdot (\mu_s - \mu_r)| \le \frac{1}{10} \|\mu_s - \mu_r\|^2
\end{equation}
This is a quantitative version of the property that the distance of a sample to its
own center is smaller than its distance to the center of another cluster, but this
representation has some technical advantages. In fact, one can reasonably consider
this as an assumption of the model, since without such a property, it's not clear how
one could cluster the data even in principle.

\begin{figure}[ht]
\centering
\begin{tikzpicture}[scale=1.5]
  \coordinate (mu_r) at (0,0);
  \coordinate (mu_s) at (4,0);
  \coordinate (v) at (1.5,2);
  \draw[->,thick] (mu_r) -- (mu_s) node[right]{$\mu_s$};
  \draw[->,thick] (mu_r) -- (v) node[above]{$v$};
  \draw[dashed] (v) -- ($(mu_r)!(v)!(mu_s)$);
  \node[below left] at (mu_r) {$\mu_r$};
\end{tikzpicture}
\caption{If $v \in T_r$, the projection of $v - \mu_r$ on $\mu_s - \mu_r$ will be closer to $\mu_r$ than it is to $\mu_s$.}
\label{fig:projection}
\end{figure}
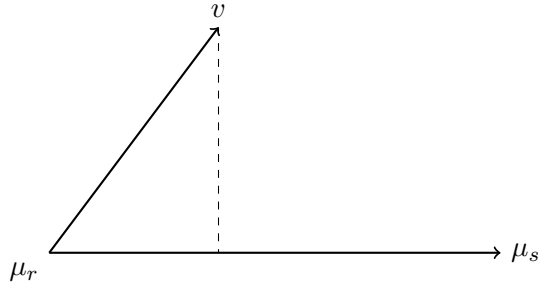

\paragraph{Related Work.}
Let us discuss another example to further illustrate the dichotomy between discrete
and continuous distributions in the literature. Two related papers, \cite{dasgupta04} and \cite{achlioptas05}
employ a very natural linkage algorithm to perform the clean-up phase, and share
much of their algorithm. However, these papers respectively address discrete and
continuous distributions, and the difference influences the algorithm used in telling
ways. In \cite{achlioptas05}, the authors generalize high-dimensional Gaussians by adopting the
notion of $f$-concentrated distributions. Consequently, a simple projection algorithm
suffices. In \cite{dasgupta04}, on the other hand, the clean-up procedure uses the ``combinatorial
projection'' method pioneered in \cite{mcsherry01}.

\section{The Algorithm}\label{sec:algorithm}

The algorithm, reduced to its essentials, is simple and natural. First, we randomly
divide the rows into two equal parts --- $A_1$ and $A_2$. We will use information of one
part to cluster the other part, as this allows to utilize independence of these parts.
For $A_1$, we will find its best rank $k$ approximation $A_1^{(k)}$ (by computing the Singular
Value Decomposition of the matrix). We will find approximately correct centers by
a procedure described in the next few paragraphs. Once we have the centers, we
measure the distance of each sample in $A_2$ (the other part) to these centers, and
group the sample with the closest center.

To find the approximate centers we will have to solve the solution to the $\ell_2^2$-clustering problem (also known as the ``$k$-means problem'').

\medskip
\noindent $\ell_2^2$ \textbf{clustering problem:} Given a set of vectors $S = \{v_1, \ldots, v_l\}$ in $\mathbb{R}^d$ and a
positive integer $k$, the problem is to find $k$ points $f_1, \ldots, f_k \in \mathbb{R}^d$ (called ``centers'') so
as to minimize the sum of squared distances from each vector $v_i$ to its closest center.
This naturally defines a partitioning of the $n$ points into $k$ clusters. We compute
average of the vectors in each cluster, and these will be the approximate centers.
Quite efficient constant factor deterministic approximation algorithms (even PTAS's)
are available for this problem (See \cite{drineas99, jain01, kanungo04} and references therein). We will work
with the factor 9 approximation algorithm of Kanungo et.\ al.\ \cite{kanungo04}.

\begin{algorithm}
\caption{$\textsc{Cluster}(A, k)$}
\begin{algorithmic}[1]
\STATE Randomly divide the rows $A$ into two equal parts $A_1$ and $A_2$
\STATE $(\theta_1, \ldots, \theta_k) = \textsc{Centers}(A_1, k)$
\STATE $(\nu_1, \ldots, \nu_k) = \textsc{Centers}(A_2, k)$
\STATE $(P_1^1, \ldots, P_k^1) = \textsc{Project}(A_1, \nu_1, \ldots, \nu_k)$
\STATE $(P_1^2, \ldots, P_k^2) = \textsc{Project}(A_2, \theta_1, \ldots, \theta_k)$
\STATE Assume $P_r^1$ and $P_r^2$ correspond the same clusters for all $r$ (this can be done by comparing centers and permuting the index set accordingly)
\STATE \textbf{return} $(P_1^1 \cup P_1^2, \ldots, P_k^1 \cup P_k^2)$
\end{algorithmic}
\end{algorithm}

\begin{algorithm}
\caption{$\textsc{Centers}(A, k)$}
\begin{algorithmic}[1]
\STATE Find $A^{(k)}$, the best rank-$k$ approximation of the matrix $A$.
\STATE Solve the $\ell_2^2$ clustering problem with the following parameters
\begin{itemize}
\item The input vectors are rows of $A^{(k)}$
\item $k$ is the number of centers we seek
\end{itemize}
\STATE Let the clusters computed by the $\ell_2^2$ algorithm be $P_1 \ldots P_k$
\STATE For all $r$, compute $\mu_r^* = \frac{1}{|P_r|} \sum_{v \in P_r} v$
\STATE \textbf{return} $(\mu_1^*, \ldots, \mu_k^*)$
\end{algorithmic}
\end{algorithm}

\begin{algorithm}
\caption{$\textsc{Project}(A, \mu_1^*, \ldots, \mu_k^*)$}
\begin{algorithmic}[1]
\FOR{each $v \in A$}
  \FOR{each $r \in [k]$}
    \IF{$|(v - \mu_r^*) \cdot (\mu_s^* - \mu_r^*)| \le |(v - \mu_s^*) \cdot (\mu_r^* - \mu_s^*)|$ for all $r \ne s$}
      \STATE Put $v$ in $P_r$ (breaking ties arbitrarily)
    \ENDIF
  \ENDFOR
\ENDFOR
\STATE \textbf{return} $(P_1, \ldots, P_k)$
\end{algorithmic}
\end{algorithm}

\section{Analysis}\label{sec:analysis}

The following theorem of Vu \cite{vu05} bounds the spectral norm of $\|A - E(A)\|$:

\begin{theorem}\label{thm:vu}
If $A$ is generated as described in Section~\ref{sec:model}, then with probability $1 - o(1)$
\begin{equation}\label{eq:spectralnorm}
\|A - E(A)\|^2 \le c\sigma^2(m + n)
\end{equation}
for some (small) constant $c$.
\end{theorem}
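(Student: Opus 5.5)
Since this is a theorem of Vu \cite{vu05}, the plan is to sketch why the bound holds for the random matrix $X = A - E(A)$ rather than reprove it in full. The entries of $X$ are independent, have mean zero, are bounded by $1$ in absolute value, and have variance $\mu_r(i)(1-\mu_r(i)) \le \sigma^2$. We first reduce to the square symmetric case. Form the $(m+n)\times(m+n)$ Hermitian dilation
\[
H = \begin{pmatrix} 0 & X \\ X^T & 0 \end{pmatrix}.
\]
This matrix satisfies $\|H\| = \|X\|$. Its entries above the diagonal are independent, with variance at most $\sigma^2$ and magnitude at most $1$. (If needed, pad $H$ with independent zero-mean entries of variance $\sigma^2$ so it becomes a full Wigner-type matrix; this only increases the norm in expectation by a symmetrization argument.) It then suffices to show that $\|H\| \le C\sigma\sqrt{N}$ with probability $1-o(1)$, where $N = m+n$.

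The core step is the trace (moment) method. We bound $E\,\mathrm{tr}(H^{2p})$ for $p$ of order $\log N$ by expanding it as a sum over closed walks of length $2p$ on $N$ vertices. Walks that traverse some edge exactly once contribute zero. For the remaining walks, each distinct edge contributes a factor at most $\sigma^2$, and any extra traversals contribute a factor at most $1$, since entries are bounded. Counting walks by the number of distinct vertices and edges, in the style of F\"uredi--Koml\'os and Vu, gives
\[
E\,\mathrm{tr}(H^{2p}) \le N \,(C\sigma\sqrt N)^{2p}
\]
as long as the walks with repeated edges are dominated by tree-like walks. Markov's inequality then gives $\|H\| \le C'\sigma\sqrt N$ with probability $1-o(1)$, because $N^{1/2p} = O(1)$ when $p \sim \log N$. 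Squaring yields \eqref{eq:spectralnorm}.

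The main obstacle is the combinatorial counting in the sparse regime. When $\sigma^2$ is small, a walk that revisits an edge many times gains only a factor $1$ per revisit instead of $\sigma^2$, so the non-tree walks must be controlled carefully. This is exactly where Vu's refined counting needs $\sigma \gg \log^2 N/\sqrt N$, which the model's assumption $\sigma^2 \ge \log^6 n/n$ supplies, assuming $m$ and $n$ are polynomially related. I would invoke Vu's bound, with $p \approx \sigma^{1/2}N^{1/4}$ as in his paper, rather than redo it. As a cruder alternative, matrix Bernstein gives $\|X\| \le C(\sigma\sqrt{N\log N} + \log N)$, which loses a $\log$ factor. That loss could be absorbed by the $\log m$ term in the separation condition \eqref{eq:separation}, but it does not give the stated constant-factor bound.
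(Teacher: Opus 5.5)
Your proposal is correct and takes the same route as the paper, which gives no argument beyond citing Vu \cite{vu05}; your Hermitian dilation and your check of Vu's hypothesis $\sigma \ge C'\log^2 N/\sqrt{N}$ with $N=m+n$ are exactly the details needed to apply his symmetric-matrix theorem to the rectangular matrix $A-E(A)$. One small correction: you do not need $m$ and $n$ to be polynomially related, because $\log^4 N/N$ is decreasing for $N\ge e^4$ and $N\ge n$, so $\sigma^2\ge \log^6 n/n \ge \log^2 n\cdot \log^4 N/N$ holds for any $m$; this also makes Vu's second-order term $o(\sigma\sqrt{N})$, so any constant $c>4$ works for large $n$.
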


The following was proved by McSherry \cite{mcsherry01}:

\begin{lemma}\label{lem:mcsherry}
With probability $1 - o(1)$, the best rank-$k$ approximation matrix $A^{(k)}$ satisfies
\[
\|A^{(k)} - E(A)\|_F^2 \le 8ck\sigma^2(m+n)
\]
\end{lemma}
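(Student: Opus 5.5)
The plan is to combine the low-rank structure of both $A^{(k)}$ and $E(A)$ with the spectral norm bound of Theorem~\ref{thm:vu}. Set $\Delta = A^{(k)} - E(A)$. We have $\mathrm{rank}(A^{(k)}) \le k$, and $\mathrm{rank}(E(A)) \le k$ because every column of $E(A)$ is one of the $k$ centers $\mu_1,\ldots,\mu_k$. Hence $\mathrm{rank}(\Delta) \le 2k$. For any matrix of rank at most $2k$,
\[
\|\Delta\|_F^2 \le 2k\|\Delta\|^2 ,
\]
since the squared Frobenius norm is the sum of at most $2k$ squared singular values, each of which is at most $\|\Delta\|^2$. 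It therefore suffices to bound the spectral norm $\|\Delta\|$.

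For this we write $\Delta = (A^{(k)} - A) + (A - E(A))$ and apply the triangle inequality:
\[
\|\Delta\| \le \|A - A^{(k)}\| + \|A - E(A)\| .
\]
The key observation is that $\|A - A^{(k)}\| = s_{k+1}(A)$, the $(k+1)$-st singular value of $A$, by the Eckart--Young theorem. Because $E(A)$ has rank at most $k$, the same theorem gives
\[
s_{k+1}(A) = \min_{\mathrm{rank}(B)\le k}\|A - B\| \le \|A - E(A)\| .
\]
Therefore $\|\Delta\| \le 2\|A - E(A)\|$, so $\|\Delta\|^2 \le 4\|A-E(A)\|^2$. Inserting the bound (\ref{eq:spectralnorm}) from Theorem~\ref{thm:vu}, which holds with probability $1-o(1)$, gives $\|\Delta\|^2 \le 4c\sigma^2(m+n)$.

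Combining the two steps,
\[
\|\Delta\|_F^2 \le 2k \cdot 4c\sigma^2(m+n) = 8ck\sigma^2(m+n),
\]
on the same event of probability $1-o(1)$, which is exactly the claimed bound. The step that needs the most care is the comparison $\|A - A^{(k)}\| \le \|A - E(A)\|$, which relies on the optimality of $A^{(k)}$ in spectral norm among rank-$k$ matrices. Everything else is bookkeeping: checking that the rank of $\Delta$ is at most $2k$, and that the constant works out to $2k \cdot 4 = 8k$.
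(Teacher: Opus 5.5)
Your proof is correct and is essentially the paper's argument: you use that $\mathrm{rank}(A^{(k)}-E(A))\le 2k$ to get $\|\cdot\|_F^2\le 2k\|\cdot\|^2$, then the triangle inequality, the optimality bound $\|A-A^{(k)}\|\le\|A-E(A)\|$ (which the paper uses implicitly and you justify via Eckart--Young), and Theorem~\ref{thm:vu}. The only cosmetic difference is that the paper first applies $(a+b)^2\le 2(a^2+b^2)$ and then invokes optimality, whereas you bound $a\le b$ first to get $(2b)^2$; both give the constant $8$.
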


\begin{proof}
As both $A^{(k)}$ and $E(A)$ are of rank $k$, $A^{(k)} - E(A)$ has rank at most $2k$.
Then
\begin{align*}
\|A^{(k)} - E(A)\|_F^2 &\le 2k\|A^{(k)} - E(A)\|^2 \\
&= 2k\|A^{(k)} - A + A - E(A)\|^2 \\
&\le 4k(\|A - A^{(k)}\|^2 + \|A - E(A)\|^2) \\
&\le 8k(\|A - E(A)\|^2) \le 8ck\sigma^2(m+n) \qedhere
\end{align*}
\end{proof}

\begin{lemma}\label{lem:Bs}
Given an instance of the clustering problem, consider any vector $y$ such
that, for some $r$
\[
\|y - \mu_r\|^2 \le \frac{1}{9}\|\mu_r - \mu_s\|^2
\]
for all $s \ne r$. Let $B_s = \{x \in T_s : \|A^{(k)}(x) - y\|^2 \le \frac{1}{4}\|\mu_r - \mu_s\|^2\}$. Also let
$\sum_{B_s} \|A^{(k)}(x) - \mu_s\|^2 = E_s$. Then for all $s \ne r$,
\[
|B_s| \le \frac{40 E_s}{\|\mu_r - \mu_s\|^2}
\]
And specifically,
\[
|B_s| \le \frac{320ck\sigma^2(m+n)}{\|\mu_r - \mu_s\|^2}
\]
\end{lemma}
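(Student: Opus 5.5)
The plan is a triangle-inequality argument applied point by point, then summed over $B_s$. Fix $s \ne r$ and write $D = \|\mu_r - \mu_s\|$.

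First I bound $\|y - \mu_s\|$ from below. By the hypothesis $\|y-\mu_r\| \le D/3$, so
\[
\|y - \mu_s\| \ge \|\mu_r - \mu_s\| - \|y - \mu_r\| \ge \tfrac{2}{3}D .
\]
Next, take any $x \in B_s$. By definition $\|A^{(k)}(x) - y\| \le \tfrac{1}{2} D$. The triangle inequality then gives
\[
\|A^{(k)}(x) - \mu_s\| \ge \|y-\mu_s\| - \|A^{(k)}(x) - y\| \ge \left(\tfrac{2}{3} - \tfrac{1}{2}\right) D = \tfrac{1}{6}D .
\]
Squaring, each $x \in B_s$ satisfies $\|A^{(k)}(x) - \mu_s\|^2 \ge D^2/36$.

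Summing over $B_s$ gives $E_s \ge |B_s| D^2/36$. Hence
\[
|B_s| \le \frac{36 E_s}{D^2} \le \frac{40E_s}{D^2},
\]
which is the first claim; the constant $40$ simply leaves slack.

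For the second claim, note that $x \in T_s$ means the column of $E(A)$ indexed by $x$ is exactly $\mu_s$. So $E_s$ is a partial sum of the squared column norms of $A^{(k)} - E(A)$, which gives
\[
E_s \le \|A^{(k)} - E(A)\|_F^2 \le 8ck\sigma^2(m+n)
\]
by Lemma~\ref{lem:mcsherry}. Multiplying by $40$ yields the stated bound $320ck\sigma^2(m+n)/D^2$, holding with probability $1-o(1)$.

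The only real care needed is in this last step. The notation $A^{(k)}(x)$ has to be read as the column of $A^{(k)}$ corresponding to sample $x$. The paper itself switches between "rows" and "columns" of $A$, so the indexing must be interpreted consistently, so that the identification with the Frobenius norm of $A^{(k)} - E(A)$ is valid. Beyond that, the whole argument is two triangle inequalities, and I expect no genuine obstacle.
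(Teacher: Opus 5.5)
Your proposal is correct and follows the paper's proof essentially line for line. The same two triangle inequalities give the per-point bound $\|A^{(k)}(x)-\mu_s\| \ge \tfrac16\|\mu_r-\mu_s\|$; the paper rounds this to $0.16$ and uses $1/0.0256 \le 40$, while your exact $36$ is slightly sharper. Summing over $B_s$ and bounding $E_s$ via Lemma~\ref{lem:mcsherry} then finishes it. You also correctly bound $E_s$ by $\|A^{(k)}-E(A)\|_F^2$, whereas the paper's displayed $\|A-E(A)\|_F^2$ is a typo; that quantity can be of order $\sigma^2 mn$ and would not give the stated bound.
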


\begin{proof}
From the conditions of the Lemma,
\begin{align*}
\|y - \mu_s\| &= \|y - \mu_r + \mu_r - \mu_s\| \\
&\ge \|\mu_r - \mu_s\| - \|y - \mu_r\| \\
&\ge 0.66\|\mu_r - \mu_s\|
\end{align*}

Now, for each $x \in B_s$
\begin{align*}
\|A^{(k)}(x) - \mu_s\| &\ge (\|y - \mu_s\| - \|A^{(k)}(x) - y\|) \\
&\ge 0.16 \times \|\mu_r - \mu_s\|
\end{align*}

By assumption,
\[
\sum_{B_s} \|A^{(k)}(x) - \mu_s\|^2 = E_s
\]
\[
\Rightarrow \quad 0.0256 \times \|\mu_r - \mu_s\|^2 |B_s| \le E_s
\]
\[
\Rightarrow \quad |B_s| \le \frac{40 E_s}{\|\mu_r - \mu_s\|^2}
\]

Now as
\[
E_s = \sum_{B_s} \|A^{(k)}(x) - \mu_s\|^2 \le \|A - E(A)\|_F^2 \le 8ck\sigma^2(m+n)
\]
\[
|B_s| \le \frac{320ck\sigma^2(m+n)}{\|\mu_r - \mu_s\|^2}
\]
This proves the Lemma.
\end{proof}

The important aspect of the previous Lemma is that the size of $|B_s|$ goes down as
$\|\mu_r - \mu_s\|^2$ increases. This property will be used later.

We characterize the clustering produced in $\textsc{Centers}(A, k)$ in the following Lemma.

\begin{lemma}\label{lem:centers_clustering}
Consider a clustering $P_1, P_2, \ldots, P_k$ produced by the algorithm $\textsc{Centers}(A, k)$.
We claim:
\begin{itemize}
\item Each $P_r$ can be identified with a unique and different $T_s$ such that
\begin{equation}\label{eq:PrTs}
|P_r \cap T_s| \ge \frac{4}{5} m_s
\end{equation}
Without loss of generality, we shall assume $s = r$.
\end{itemize}

We define
\begin{align*}
\bar{P}_r &= T_r \cap P_r \\
Q_r' &= P_r - \bar{P}_r \\
Q_{rs}' &= P_s \cap Q_r' \\
E_{rs} &= \sum_{v \in Q_{rs}'} \|A^{(k)}(v) - \mu_r\|^2
\end{align*}

Then for all $r, s \in [k]$ such that $r \ne s$
\begin{equation}\label{eq:qrs}
q_{rs}' = |Q_{rs}'| \le \frac{40 E_{rs}}{\|\mu_r - \mu_s\|^2} \le \frac{320ck\sigma^2(m+n)}{\|\mu_r - \mu_s\|^2}
\end{equation}
\end{lemma}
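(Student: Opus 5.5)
We will compare the cost of the clustering returned by the 9-approximate $\ell_2^2$ algorithm with the cost of the ``true'' clustering of the rows of $A^{(k)}$, and then localize every $\ell_2^2$ center near a distinct true center. This is what lets Lemma~\ref{lem:Bs} apply.

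\emph{Step 1 (cost bound).} The candidate solution using centers $\mu_1,\ldots,\mu_k$ with the true partition has cost $\sum_x \|A^{(k)}(x) - E(A)(x)\|^2 = \|A^{(k)} - E(A)\|_F^2 \le 8ck\sigma^2(m+n)$ by Lemma~\ref{lem:mcsherry}. Hence the computed solution, with centers $f_1,\ldots,f_k$, has cost at most $72ck\sigma^2(m+n)$. Replacing $f_r$ by the cluster average $\mu_r^*$ only lowers the cost, so the same bound holds for the $\mu_r^*$.

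\emph{Step 2 (every true center has a nearby computed center).} Fix $s$ and suppose no $\mu_r^*$ satisfies $\|\mu_r^* - \mu_s\|^2 \le \frac{1}{9}\min_{t\ne s}\|\mu_s-\mu_t\|^2 =: \frac{1}{9}\Delta_s^2$. Consider the points $x \in T_s$ with $\|A^{(k)}(x) - \mu_s\| \le \frac{1}{6}\Delta_s$. By Lemma~\ref{lem:mcsherry} and Markov, all but at most $36\cdot 8ck\sigma^2(m+n)/\Delta_s^2$ points of $T_s$ are of this kind. The separation condition \eqref{eq:separation}, with $w_{\min}m \le m_s$, makes that exceptional count at most $m_s/20$. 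Each remaining point is at distance at least $\frac{1}{3}\Delta_s - \frac{1}{6}\Delta_s = \frac16\Delta_s$ from every center. The cost is then at least $\frac{19}{20}m_s\Delta_s^2/36$. Since $\Delta_s^2 \ge 8100ck\sigma^2(m+n)/(w_{\min} m)$ and $m_s = w_s m \ge w_{\min}m$, this exceeds $72ck\sigma^2(m+n)$, a contradiction. Because the balls of radius $\frac13\Delta_s$ around distinct $\mu_s$ are disjoint, the map $s\mapsto r$ is injective, hence a bijection. After relabeling, $\|\mu_r^*-\mu_r\|^2\le\frac19\|\mu_r-\mu_s\|^2$ for all $s\ne r$.

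\emph{Step 3 (counting misclassified points).} Take $v\in Q'_{rs}$, i.e.\ $v\in T_r$ assigned to $P_s$. Then $A^{(k)}(v)$ is at least as close to $\mu_s^*$ as to $\mu_r^*$. We need the conclusion $\|A^{(k)}(v)-\mu_s^*\|^2\le\frac14\|\mu_r-\mu_s\|^2$, or directly $\|A^{(k)}(v)-\mu_r\|\ge$ a constant times $\|\mu_r-\mu_s\|$. Using the triangle inequality, $\|A^{(k)}(v)-\mu_r\|\ge\frac12\|\mu_r^*-\mu_s^*\|-\|\mu_r^*-\mu_r\|$, and $\|\mu_r^*-\mu_s^*\|\ge(1-\frac23)\|\mu_r-\mu_s\|$. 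This gives only about $\frac16-\frac13<0$, so the plain triangle bound fails. Instead we use the bisector geometry: $A^{(k)}(v)$ lies on the $\mu_s^*$ side of the bisecting hyperplane of $\mu_r^*,\mu_s^*$. Its distance to $\mu_r$ is at least the distance from $\mu_r$ to that hyperplane, which is $\frac12\|\mu_r^*-\mu_s^*\|-\bigl|(\mu_r-\mu_r^*)\cdot u\bigr|$ with $u$ the unit normal. To make this a positive constant fraction of $\|\mu_r-\mu_s\|$ (roughly $0.16$, which yields the factor $40$ as in Lemma~\ref{lem:Bs}), we should strengthen Step 2 so that the $\mu_r^*$ lie within a smaller radius, say $\frac1{20}\|\mu_r-\mu_s\|$. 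The constant $8100$ leaves room for this by redoing Step 2 with a smaller radius. With that bound, summing $\|A^{(k)}(v)-\mu_r\|^2$ over $Q'_{rs}$ gives $q'_{rs}\cdot 0.0256\|\mu_r-\mu_s\|^2\le E_{rs}$, and $E_{rs}\le 8ck\sigma^2(m+n)$ gives \eqref{eq:qrs}.

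\emph{Step 4.} Finally, \eqref{eq:PrTs} follows from summing \eqref{eq:qrs} over $s\ne r$. The points of $T_r$ outside $P_r$ number at most $\sum_s 320ck\sigma^2(m+n)/\|\mu_r-\mu_s\|^2$, which is at most $m_r/5$ by \eqref{eq:separation}. The factor $k$ there absorbs the sum over $s$.

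The main obstacle is Step 3: obtaining a quantitatively tight enough localization of $\mu_r^*$ so that misassigned points are provably far from their own center. I would first attempt the hyperplane argument above with the sharpened radius from Step 2.
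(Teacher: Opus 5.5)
Steps 1 and 2 are sound. Step 2 (if no computed center lies near $\mu_s$, the cost exceeds $72ck\sigma^2(m+n)$; disjoint balls then give the bijection) is actually a cleaner localization than the paper's. The paper lower-bounds the cost by $\sum_{T_r}\|A^{(k)}(i)-f_r\|^2$, as if all of $T_r$ were served by $f_r$. Run Step 2 with the centers $f_r$ returned by the $\ell_2^2$ algorithm, not the centroids $\mu_r^*$. The partition $P_1,\dots,P_k$ is the Voronoi partition of the $f_r$, and Step 3 needs exactly that property.

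The genuine gap is Step 3. You conclude that radius $\frac13$ is too weak and propose shrinking it, but that plan fails as stated. Rerunning Step 2 with radius $\rho\Delta_s$ gives cost at least $(2025\rho^2-8)ck\sigma^2(m+n)$. This closes only when roughly $2025\rho^2>80$, i.e.\ $\rho\approx 0.2$ or larger; the $\log m$ term does not help when $n\gg m\log m$. So radius $\frac1{20}\|\mu_r-\mu_s\|$ is out of reach in general. At $\rho=1/\sqrt{20}$, your estimate $\frac12\|f_r-f_s\|-|(\mu_r-f_r)\cdot u|$ gives only about $0.05\|\mu_r-\mu_s\|$, not the $0.16$ needed for the constant $40$.

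The missing observation is that radius $\frac13$ already suffices. Let $D=\|\mu_r-\mu_s\|$ and $x=A^{(k)}(v)$ with $\|x-f_s\|\le\|x-f_r\|$. Then
\[
\|f_s-\mu_r\|\le\|x-f_s\|+\|x-\mu_r\|\le\|x-f_r\|+\|x-\mu_r\|\le 2\|x-\mu_r\|+\|f_r-\mu_r\|.
\]
Step 2 gives $\|f_r-\mu_r\|\le\frac13D$ and $\|f_s-\mu_r\|\ge\frac23D$. Hence $\|x-\mu_r\|\ge\frac12\bigl(\frac23D-\frac13D\bigr)=\frac D6$, and $q'_{rs}\le 36E_{rs}/D^2\le 40E_{rs}/D^2$.

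Your hyperplane estimate loses because it bounds $\|f_r-f_s\|$ from below and the projection of $\mu_r-f_r$ from above independently. The exact distance from $\mu_r$ to the bisector is $(\|\mu_r-f_s\|^2-\|\mu_r-f_r\|^2)/(2\|f_r-f_s\|)\ge(\|\mu_r-f_s\|-\|\mu_r-f_r\|)/2$. The margin $\frac16=\frac23-\frac12$ is the same one that yields $0.16$ in Lemma~\ref{lem:Bs}.

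Step 4 is also wrong as written. The factor $k$ in \eqref{eq:separation} is already spent cancelling the $k$ in $8ck\sigma^2(m+n)$. So summing the \emph{second} bound of \eqref{eq:qrs} over $s\ne r$ gives about $0.04(k-1)w_{\min}m$, which is guaranteed to be below $\frac15 m_r$ only for $k\le 6$.

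Sum the first bound instead. The sets $Q'_{rs}$, $s\ne r$, are disjoint subsets of $T_r$, so $\sum_s E_{rs}\le\|A^{(k)}-E(A)\|_F^2\le 8ck\sigma^2(m+n)$. With $\|\mu_r-\mu_s\|\ge\Delta_r$ this gives $|T_r\setminus P_r|\le 288ck\sigma^2(m+n)/\Delta_r^2\le\frac{288}{8100}w_{\min}m<\frac15 m_r$, which is \eqref{eq:PrTs}.
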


\begin{proof}
First we claim that there is a solution to the $\ell_2^2$ clustering problem so
that the cost of the solution $\le 8ck\sigma^2(m+n)$. Set the centers to be $f_r = \mu_r$ for all $r$
and let the cost of this solution be $C$.
Now,
\begin{align*}
C &\le \sum_r \sum_{A^{(k)}(i) \in [T_r]} \|A^{(k)}(i) - \mu_r\|^2 \\
&= \|A^{(k)} - E(A)\|_F^2 \\
&\le 8ck\sigma^2(m+n)
\end{align*}
By Lemma~\ref{lem:mcsherry}.

\begin{remark}
The problem calls for centers $f_r$ to be in the span of vectors in $A^{(k)}$,
and $\mu_r$ might not be. But this simply strengthens the result.
\end{remark}

Accordingly, $\textsc{Cluster}(G, k)$ produces a solution to the $\ell_2^2$ clustering problem with
cost no more than $72ck\sigma^2(m+n)$ (since we compute a factor 9 approximation to the
solution). We claim, for each $P_r$ the center $f_r$ will be such that, for all $s \ne r$
\begin{equation}\label{eq:fr_close}
\|\mu_r - f_r\|^2 \le \frac{1}{9}\|\mu_r - \mu_s\|^2
\end{equation}

If this is not true for some $r$, then the cost of the solution is at least
\begin{align*}
&\sum_{A^{(k)}(i) \in T_r} \|A^{(k)}(i) - f_r\|^2 \\
\ge\;& \sum_{A^{(k)}(i) \in T_r} \|A^{(k)}(i) - \mu_r + f_r - \mu_r\|^2 \\
\ge\;& \frac{1}{4} \sum_{A^{(k)}(i) \in T_r} \|f_r - \mu_r\|^2 - 3 \sum_{A^{(k)}(i) \in T_r} \|A^{(k)}(i) - \mu_r\|^2 \\
\ge\;& \frac{1}{36 w_{\min}} 8100ck\sigma^2 \left(1 + \frac{n}{m}\right) \log m \cdot m_r - 216ck\sigma^2(m+n) \\
\ge\;& 9ck\sigma^2(m+n)
\end{align*}
which is a contradiction. We used the bound $\|u+v\|^2 \ge \frac{1}{4}\|u\|^2 - 3\|v\|^2$ for the
second inequality, and the assumption that Eqn~\eqref{eq:fr_close} is false for the third inequality.
Assuming \eqref{eq:fr_close}, Lemma~\ref{lem:Bs} implies \eqref{eq:qrs}. Equation~\eqref{eq:PrTs} follows from essentially the
same argument.
\end{proof}

We now focus on the centers $\mu_r^*$ produced by $\textsc{Centers}(A, k)$. First we would like
to show that they are close to the real centers.

\begin{lemma}\label{lem:centers_close}
Let $\mu_1^*, \ldots, \mu_k^*$ be the centers returned by the procedure $\textsc{Centers}(A, k)$.
Then for all $r \in [k]$, and for all $s \ne r$
\begin{equation}\label{eq:centers_close}
\|\mu_r^* - \mu_r\|^2 \le 81ck\sigma^2 \frac{1}{w_{\min}} \left(1 + \frac{n}{m}\right) \le \frac{1}{20}\|\mu_r - \mu_s\|^2
\end{equation}
\end{lemma}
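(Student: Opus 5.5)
We bound $\mu_r^* - \mu_r$ by splitting the average defining $\mu_r^*$ according to Lemma~\ref{lem:centers_clustering}. Write $\mu_r^* = \frac{1}{|P_r|}\sum_{v\in P_r} A^{(k)}(v)$, so that
\[
\mu_r^* - \mu_r = \frac{1}{|P_r|}\sum_{v\in \bar P_r}\bigl(A^{(k)}(v)-\mu_r\bigr) + \frac{1}{|P_r|}\sum_{s\ne r}\sum_{v\in Q'_{sr}}\bigl(A^{(k)}(v)-\mu_r\bigr).
\]
For the denominator we use \eqref{eq:PrTs}, which gives $|P_r| \ge |\bar P_r| \ge \frac45 m_r \ge \frac45 w_{\min} m$. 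The plan is to bound the squared norm of each part by Cauchy--Schwarz and compare the result with the Frobenius bound of Lemma~\ref{lem:mcsherry}.

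\emph{Good part.} By Cauchy--Schwarz, $\|\sum_{\bar P_r}(A^{(k)}(v)-\mu_r)\|^2 \le |\bar P_r|\sum_{\bar P_r}\|A^{(k)}(v)-\mu_r\|^2 \le |P_r|\cdot 8ck\sigma^2(m+n)$. Dividing by $|P_r|^2$ leaves at most $\frac{8ck\sigma^2(m+n)}{(4/5)w_{\min}m} = 10ck\sigma^2\frac{1}{w_{\min}}(1+\frac nm)$.

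\emph{Misclassified part.} For $v\in Q'_{sr}$ (a point of $T_s$ placed in $P_r$), split $A^{(k)}(v)-\mu_r = (A^{(k)}(v)-\mu_s) + (\mu_s-\mu_r)$.
\begin{itemize}
\item The first piece is handled exactly like the good part, using $\sum_s q'_{sr}\le |P_r|$ and total Frobenius mass at most $8ck\sigma^2(m+n)$. It contributes another $O(ck\sigma^2(1+n/m)/w_{\min})$.
\item The second piece is the crux. Its norm is at most $\sum_s q'_{sr}\|\mu_s-\mu_r\|/|P_r|$. Using \eqref{eq:qrs}, $q'_{sr}\le 320ck\sigma^2(m+n)/\|\mu_r-\mu_s\|^2$, so $q'_{sr}\|\mu_r-\mu_s\| \le 320ck\sigma^2(m+n)/\|\mu_r-\mu_s\|$.
\item The separation condition \eqref{eq:separation} gives $\|\mu_r-\mu_s\| \ge \sqrt{8100ck\sigma^2(1+n/m)/w_{\min}}$. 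Hence each term is at most $\frac{320}{90}\sqrt{ck\sigma^2(m+n)\,w_{\min}m}$.
\item Dividing by $|P_r|\ge\frac45 w_{\min}m$ and squaring gives $O(ck\sigma^2(1+n/m)/w_{\min})$ per $s$.
\end{itemize}

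This is the step I expect to be the main obstacle. There is a sum over up to $k-1$ indices $s$, which threatens an extra factor of $k$ or $k^2$ after squaring. To avoid it, I would apply Cauchy--Schwarz more carefully. Before using \eqref{eq:qrs}, I would bound $\sum_s q'_{sr}\|\mu_s-\mu_r\|$ by $(\sum_s q'_{sr})^{1/2}(\sum_s q'_{sr}\|\mu_s-\mu_r\|^2)^{1/2}$. The second factor is at most $40\sum_s E_{sr} \le 320ck\sigma^2(m+n)$, again by \eqref{eq:qrs} with the definition of $E$ summed against the total Frobenius mass. The first factor is at most $|P_r|^{1/2}$. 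Together these give $\|\cdot\|^2 \le 320ck\sigma^2(m+n)/|P_r|$ with no extra $k$.

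Finally we collect the three contributions via $\|a+b+c\|^2\le 3(\|a\|^2+\|b\|^2+\|c\|^2)$. Tracking constants, the total is at most $81ck\sigma^2\frac{1}{w_{\min}}(1+\frac nm)$; if the crude constants overshoot, we tighten the splitting inequalities. The second inequality of \eqref{eq:centers_close} is then immediate from \eqref{eq:separation}, since $8100/20 = 405 \ge 81$.
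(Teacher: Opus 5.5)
The proof as written does not reach the stated constants. The loss is in the term you called the crux, and there is also a mismatch between your definition of $\mu_r^*$ and the one the paper's proof uses.

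\textbf{The constant fails.} Write $X = ck\sigma^2\frac{1}{w_{\min}}(1+\frac{n}{m})$. Using only $\sum_{s\ne r} q'_{sr}\le |P_r|$, your refined Cauchy--Schwarz gives $\|\frac{1}{|P_r|}\sum_s q'_{sr}(\mu_s-\mu_r)\|^2\le 320ck\sigma^2(m+n)/|P_r|\le 400X$.
\begin{itemize}
\item This term alone exceeds $81X$.
\item Adding the noise terms pushes the total past $405X$: your $3(\cdot)$ combination gives $1260X$, and adding norms gives about $690X$.
\item So neither the constant $81$ (used in Lemma~\ref{lem:bernstein_app}) nor the second inequality of \eqref{eq:centers_close} (used in Lemma~\ref{lem:project}) follows.
\end{itemize}
Tightening the splitting inequalities cannot repair this, since the loss is inside the crux term itself. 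The missing fact is that misclassified points are a tiny fraction of $P_r$. By the first inequality in \eqref{eq:qrs} and \eqref{eq:separation},
\[
\sum_{s\ne r} q'_{sr}\le \frac{40\sum_s E_{sr}}{\min_{s\ne r}\|\mu_r-\mu_s\|^2}\le \frac{320}{8100}\,w_{\min}m\le 0.05\,|P_r| .
\]
The repair then goes as follows.
\begin{itemize}
\item With this bound, your Cauchy--Schwarz gives a crux term of at most $16ck\sigma^2(m+n)/|P_r|\le 20X$. The paper uses the same Cauchy--Schwarz with $\sum_s q'\le\frac15 p_r$, obtaining $64ck\sigma^2(m+n)/p_r$.
\item Treat your good part and the first misclassified piece with a single Cauchy--Schwarz over all of $P_r$; this gives $10X$.
\item Adding norms yields $(\sqrt{10}+\sqrt{20})^2X<59X\le 81X$.
\end{itemize}
Your first per-$s$ attempt also works if you keep $40E_{sr}$ rather than the worst case $320ck\sigma^2(m+n)$. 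Since $\sum_s E_{sr}\le 8ck\sigma^2(m+n)$, the sum over $s$ then costs no factor of $k$, and you get about $19.8X$.

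\textbf{Which centers?} You take $\mu_r^*$ to be the mean of the projected points $A^{(k)}(v)$. The paper's proof averages the raw samples $v$, although the pseudocode is ambiguous on this. It writes $\sum_{\bar P_r}(v-\mu_r)=(S-U)\mathbf{1}$, with $S$ a column submatrix of $A$, and bounds this by $\|A-E(A)\|\sqrt{\bar p_r}$ via Theorem~\ref{thm:vu}. Lemma~\ref{lem:bernstein_app} later relies on the coordinates of $\mu_r^*$ being averages of $0/1$ entries.

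For projected means, your Frobenius/Cauchy--Schwarz control of the noise is a legitimate alternative. It is more self-contained, needing only Lemma~\ref{lem:mcsherry}. It does not transfer to raw means, however. There $\sum_{v\in\bar P_r}\|v-\mu_r\|^2$ can be of order $n\sigma^2\bar p_r$, which yields only $\|\mu_r^*-\mu_r\|^2=O(n\sigma^2)$. In that setting the spectral-norm bound $\|(A-E(A))\mathbf{1}_{P}\|\le\|A-E(A)\|\sqrt{|P|}$ is essential.

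So state which centers you mean. If they are the raw means used downstream, replace your two noise bounds with the paper's spectral-norm argument.
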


\begin{proof}
We know,
\[
\mu_r^* = \frac{1}{p_r} \sum_{v \in P_r} v
\]
\[
\Rightarrow \quad p_r \mu_r^* = \sum_{v \in P_r} v = \sum_{\bar{P}_r} v + \sum_s \sum_{Q_{rs}'} v
\]
\[
\Rightarrow \quad p_r(\mu_r^* - \mu_r) = \sum_{\bar{P}_r} (v - \mu_r) + \sum_s \sum_{Q_{rs}'} (v - \mu_r)
\]

Then,
\begin{equation}\label{eq:pr_bound}
\|p_r(\mu_r^* - \mu_r)\| \le \left\|\sum_{\bar{P}_r} (v - \mu_r)\right\| + \sum_s \left\|\sum_{Q_{rs}'} (v - \mu_r)\right\|
\end{equation}

Let the samples in $\bar{P}_r$ be $v^1, \ldots, v^{\bar{p}_r}$. Let's define the matrix $S$ with these samples
as columns:
\[
S = \begin{pmatrix} v^1 & v^2 & \cdots & v^{\bar{p}_r} \end{pmatrix}
\]
And $U$ be the matrix of same dimensions with all columns equal to $\mu_r$
\[
U = \begin{pmatrix} \mu_r & \mu_r & \cdots & \mu_r \end{pmatrix}
\]
Also let $\mathbf{1} = \{1, \ldots, 1\}^T$, with $\bar{p}_r$ entries. Now, by Theorem~\ref{thm:vu},
\begin{align*}
\|S - U\| &\le \|A - E(A)\| \le \sigma\sqrt{c(m+n)} \\
\Rightarrow \quad \|(S - U)\mathbf{1}\| &\le \|\mathbf{1}\|\sigma\sqrt{c(m+n)} \le \sigma\sqrt{c(m+n)\bar{p}_r}
\end{align*}

As
\[
\sum_{\bar{P}_r} (v - \mu_r) = (S - U)\mathbf{1}
\]
\begin{equation}\label{eq:Pr_bar_bound}
\left\|\sum_{\bar{P}_r} (v - \mu_r)\right\| \le c\sigma\sqrt{(m+n)\bar{p}_r}
\end{equation}

Now, for any $s$
\[
\left\|\sum_{Q_{rs}'} (v - \mu_r)\right\| \le \left\|\sum_{Q_{rs}'} (v - \mu_s)\right\| + \|q_{rs}'\mu_s - q_{rs}'\mu_r\|
\]

But we know by Lemma~\ref{lem:Bs}
\[
q_{rs}' \le \frac{40 E_{rs}}{\|\mu_r - \mu_s\|^2}
\]

Then,
\begin{align}
\|q_{rs}'\mu_s - q_{rs}'\mu_r\| &= ((q_{rs}')^2 \|\mu_s - \mu_r\|^2)^{\frac{1}{2}} \le (40 q_{rs}' E_{rs})^{\frac{1}{2}} \notag \\
\Rightarrow \quad \|q_{rs}'\mu_s - q_{rs}'\mu_r\| &\le \sqrt{q_{rs}' E_{rs}} \label{eq:qrs_mu_bound}
\end{align}

On the other hand, through an argument similar to the bound for $\sum_{\bar{P}_r} (v - \mu_r)$
\begin{equation}\label{eq:Qrs_bound}
\left\|\sum_{Q_{rs}'} (v - \mu_s)\right\| \le \sigma\sqrt{cq_{rs}'(m+n)}
\end{equation}

Combining equations \eqref{eq:Pr_bar_bound}--\eqref{eq:Qrs_bound},
\begin{align*}
\|p_r(\mu_r^* - \mu_r)\| &\le \sigma\sqrt{c(m+n)\bar{p}_r} + \sum_s \sigma\sqrt{cq_{rs}'(m+n)} + \sum_s \sqrt{40 q_{rs}' E_{rs}} \\
&\le \sigma\sqrt{ck(m+n)p_r} + \sum_s \sqrt{40 q_{rs}' E_{rs}} \\
&\le 2\sigma\sqrt{ck(m+n)p_r} + \sqrt{\sum_s 40 q_{rs}'} \cdot \sqrt{\sum_s E_{rs}}
\end{align*}
Using the Cauchy--Schwartz inequality a few times. But we know that $\sum_s E_{rs} \le 8ck\sigma^2(m+n)$ and $\sum q_{rs}' \le \frac{1}{5}p_r$ (Eqn~\eqref{eq:PrTs} in Lemma~\ref{lem:centers_clustering}). Hence,
\begin{align*}
\|p_r(\mu_r^* - \mu_r)\| &\le \sigma\sqrt{ck(m+n)p_r} + 8\sqrt{ckp_r\sigma^2(m+n)} \\
&\le 9\sigma\sqrt{ck(m+n)p_r}
\end{align*}
\[
\Rightarrow \quad \|\mu_r^* - \mu_r\| \le 9\sigma\sqrt{ck \frac{(m+n)}{p_r}} \le 9\sigma\sqrt{\frac{ck}{w_{\min}}\left(1 + \frac{n}{m}\right)} \qedhere
\]
\end{proof}

Finally we would like to show that $\textsc{Project}(A, \mu_1^*, \ldots, \mu_k^*)$ returns an accurate partitioning of the data. To this end, we claim that $(v - \mu_r^*) \cdot (\mu_r^* - \mu_t^*)$ behaves essentially like $(v - \mu_r) \cdot (\mu_r - \mu_t)$.

\begin{lemma}\label{lem:project}
For each sample $u$, if $u \in T_r$, then for all $s \ne r$
\begin{equation}\label{eq:project_bound}
|(u - \mu_r^*) \cdot (\mu_r^* - \mu_t^*)| \le \frac{2}{5}\|\mu_r - \mu_t\|^2
\end{equation}
with high probability.
\end{lemma}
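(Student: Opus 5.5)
We plan to prove the bound by decomposing $(u-\mu_r^*)\cdot(\mu_r^*-\mu_t^*)$ into the ``ideal'' quantity $(u-\mu_r)\cdot(\mu_r-\mu_t)$ plus error terms, each involving at least one of the differences $\mu_r^*-\mu_r$, $\mu_t^*-\mu_t$. (The statement says ``for all $s\ne r$'' but the index in \eqref{eq:project_bound} is $t$; we read it as $t\ne r$.) The crucial structural fact is the cross-use of halves in $\textsc{Cluster}$: the centers $\mu^*$ used to classify $u\in A_2$ are computed from $A_1$ only, so $u$ is independent of $\mu_r^*,\mu_t^*$. Write $\delta_r=\mu_r^*-\mu_r$, $\delta_t=\mu_t^*-\mu_t$ and $d=\mu_r-\mu_t$, so that $\mu_r^*-\mu_t^*=d+\delta_r-\delta_t$ and $u-\mu_r^*=(u-\mu_r)-\delta_r$. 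Expanding,
\[
(u-\mu_r^*)\cdot(\mu_r^*-\mu_t^*) = (u-\mu_r)\cdot d + (u-\mu_r)\cdot(\delta_r-\delta_t) - \delta_r\cdot(d+\delta_r-\delta_t).
\]

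The terms are handled in order.

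First, $|(u-\mu_r)\cdot d|\le \frac1{10}\|d\|^2$ with probability $1-m^{-4}$ by \eqref{eq:deviation}.

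Second, the deterministic term is bounded using Lemma~\ref{lem:centers_close}, which gives $\|\delta_r\|,\|\delta_t\|\le \frac{1}{\sqrt{20}}\|d\|$. Cauchy--Schwarz then yields
\[
|\delta_r\cdot(d+\delta_r-\delta_t)|\le \|\delta_r\|\bigl(\|d\|+\|\delta_r\|+\|\delta_t\|\bigr)\le \bigl(\tfrac{1}{\sqrt{20}}+\tfrac{2}{20}\bigr)\|d\|^2\approx 0.33\|d\|^2.
\]
This alone leaves little room against $\frac25$. We therefore intend to use the sharper bound $\|\delta_r\|^2\le 81ck\sigma^2 w_{\min}^{-1}(1+n/m)$, which, compared with the separation \eqref{eq:separation}, is smaller than $\|d\|^2$ by a factor of order $100\log m$. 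With that bound this term is $o(\|d\|^2)$, say at most $\frac1{20}\|d\|^2$.

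Third, the random cross term $(u-\mu_r)\cdot e$ with $e=\delta_r-\delta_t$ is the main obstacle. This is precisely where discrete distributions cause trouble: a worst-case bound $\|u-\mu_r\|\,\|e\|$ is useless, since $\|u-\mu_r\|\sim\sigma\sqrt n$. We exploit independence instead: conditioned on $A_1$, $e$ is a fixed vector and $(u-\mu_r)\cdot e=\sum_i (u(i)-\mu_r(i))e(i)$ is a sum of independent centered terms with variance at most $\sigma^2\|e\|^2$. Each summand is bounded by $\|e\|_\infty\le\|e\|$, so Bernstein's inequality (Theorem~\ref{thm:bernstein}) gives, with probability $1-m^{-4}$,
\[
|(u-\mu_r)\cdot e|\le O\bigl(\sigma\|e\|\sqrt{\log m}+\|e\|_\infty\log m\bigr).
\]
Plugging in $\|e\|^2=O(ck\sigma^2w_{\min}^{-1}(1+n/m))$, the first part is $O\bigl(\sigma^2\sqrt{ck\,w_{\min}^{-1}(1+n/m)\log m}\bigr)$. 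This is at most $\frac1{20}\|d\|^2$ by \eqref{eq:separation}, because the separation contains the product of $ck\sigma^2/w_{\min}$, $(1+n/m)$ and $\log m$, with constant $8100$.

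The $\|e\|_\infty\log m$ part is the delicate piece. We would try bounding $\|e\|_\infty\le\|e\|$ and using $\sigma^2\ge\log^6 n/n$ to show $\|e\|\log m\ll\|d\|^2$. If that fails, we would truncate via the coordinates where $|e(i)|$ is large, since there are few of them.

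Finally, a union bound over all samples $u$ and all pairs $(r,t)$ costs $k^2m\cdot m^{-4}$, which is still high probability. Summing the three pieces gives $\frac1{10}+\frac1{20}+\frac1{20}<\frac25$ times $\|\mu_r-\mu_t\|^2$, which proves the lemma.
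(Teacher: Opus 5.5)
Your expansion and your handling of the ideal and deterministic terms follow the paper's proof. You also correctly identify the crux: the cross term, handled by Bernstein's inequality using the independence of the two halves. One small correction: the $\log m$ in \eqref{eq:separation} is additive, so $\|\delta_r\|^2\le 81ck\sigma^2w_{\min}^{-1}(1+n/m)$ only gives $\|\delta_r\|\le\frac{1}{10}\|d\|$, not a gain of order $\log m$. The deterministic term is then about $0.12\|d\|^2$ rather than $\frac{1}{20}\|d\|^2$, which still fits under $\frac25$.

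The genuine gap is the $\|e\|_\infty\log m$ part of the Bernstein bound. You leave it open, and neither of your proposed fixes works.

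\emph{Bounding $\|e\|_\infty\le\|e\|$.} This fails in exactly the sparse regime the paper targets. $\|e\|\log m$ is of order $\sigma\sqrt{ck(1+n/m)/w_{\min}}\,\log m$, while $\|d\|^2$ may be as small as order $ck\sigma^2w_{\min}^{-1}(1+n/m+\log m)$. So the ratio grows like $1/\sigma$; for $m=n$, constant $k,w_{\min}$ and $\sigma^2=\log^6 n/n$ it is of order $\sqrt n/\log^3 n$. The hypothesis $\sigma^2\ge\log^6 n/n$ pushes the wrong way.

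\emph{Truncating large coordinates.} This cannot help either, because no argument that uses only the $\ell_2$ size of $e$ can succeed. Suppose $e$ had the largest norm your bound allows and were supported on a single coordinate $i$ with $\mu_r(i)=\sigma^2$. Then with probability $\sigma^2$, which is far from $m^{-4}$, the cross term equals $(1-\sigma^2)\|e\|\gg\|d\|^2$.

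\emph{What is missing.} You need a coordinatewise bound on $\delta_r$. The paper obtains it from the fact that $\mu_r^*$ is an average of $0/1$ samples. With high probability each coordinate has at most $1.1m\sigma^2$ ones among all samples, and $p_r\ge\frac45 w_{\min}m$ by Lemma~\ref{lem:centers_clustering}. Hence $0\le\mu_r^*(i)\le 1.1m\sigma^2/p_r$. Together with $0\le\mu_r(i)\le\sigma^2$, this gives $\|\delta_r\|_\infty\le 2\sigma^2/w_{\min}$. Taking $M=O(\sigma^2/w_{\min})$ in Theorem~\ref{thm:bernstein}, the second Bernstein term becomes $O(\sigma^2w_{\min}^{-1}\log m)$. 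That is exactly what the additive $\log m$ in \eqref{eq:separation} is there to absorb.
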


\begin{proof}
Assume $\mu_r^* = \mu_r + \delta_r$; $\forall r$. Then,
\begin{align*}
(u - \mu_r^*) \cdot (\mu_t^* - \mu_r^*) &= (u - \mu_r - \delta_r) \cdot (\mu_t - \mu_r - \delta_r + \delta_t) \\
&= (u - \mu_r) \cdot (\mu_r - \mu_t) - \delta_r \cdot (\mu_t - \mu_r) - \delta_r \cdot (\delta_t - \delta_r) + (u - \mu_r) \cdot (\delta_t - \delta_r)
\end{align*}

Let's consider each term in the last sum separately.
By equation~\eqref{eq:deviation},
\begin{equation}\label{eq:term1}
(u - \mu_r) \cdot (\mu_r - \mu_t) \le \frac{1}{10}\|\mu_r - \mu_t\|^2
\end{equation}

We have already shown (Lemma~\ref{lem:centers_close}) that $\|\delta_r\| \le \frac{1}{\sqrt{20}}\|\mu_r - \mu_t\|$. Then
\begin{equation}\label{eq:term2}
|\delta_r \cdot (\mu_t - \mu_r)| \le \|\delta_r\|\|\mu_r - \mu_t\| \le \frac{1}{\sqrt{20}}\|\mu_r - \mu_t\|^2
\end{equation}
And
\begin{equation}\label{eq:term3}
|\delta_r \cdot (\delta_t - \delta_r)| \le \frac{1}{20}\|\mu_r - \mu_t\|^2
\end{equation}

The remaining term is $(u - \mu_r) \cdot (\delta_r + \delta_t)$. We prove in Lemma~\ref{lem:bernstein_app} (below) that
\begin{equation}\label{eq:term4}
|(u - \mu_r) \cdot (\delta_r + \delta_t)| \le \frac{1}{100}\|\mu_r - \mu_s\|^2
\end{equation}

Combining equations \eqref{eq:term1}--\eqref{eq:term4}, we get the proof.
\end{proof}

For the proof of Lemma~\ref{lem:bernstein_app}, we will need Bernstein's inequality (see \cite{petrov75} for a reference):

\begin{theorem}[Bernstein's inequality]\label{thm:bernstein}
Let $\{X_i\}_{i=1}^n$ be a collection of independent, almost surely absolutely
bounded random variables, that is $\Pr\{|X_i| \le M\} = 1$ $\forall i$. Then, for any $\varepsilon \ge 0$
\begin{equation}\label{eq:bernstein}
\Pr\left\{\sum_{i=1}^n (X_i - E[X_i]) \ge \varepsilon\right\} \le \exp\left(-\frac{\varepsilon^2}{2\left(\theta^2 + \frac{M}{3}\varepsilon\right)}\right)
\end{equation}
where $\theta = \sum E X_i^2$.
\end{theorem}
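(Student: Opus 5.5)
We would prove this by the standard exponential-moment (Chernoff) method, handling each $X_i$ directly rather than first centering it. Centering would give $|X_i - E X_i| \le 2M$ and lose a factor of $2$ in front of $M/3$. Bounding $E e^{\lambda X_i}$ using $|X_i|\le M$ and the raw second moment $E X_i^2$ gives exactly the stated constants. The fact that $\theta$ is defined with $E X_i^2$ rather than the variance fits this route.

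\emph{Step 1 (moment generating function of one variable).} Fix $\lambda$ with $0<\lambda<3/M$. Expand $E e^{\lambda X_i} = 1 + \lambda E X_i + \sum_{j\ge 2} \lambda^j E X_i^j / j!$. Bound $|E X_i^j| \le M^{j-2} E X_i^2$ and use $j! \ge 2\cdot 3^{j-2}$ for $j \ge 2$. Summing the geometric series gives
\[
E e^{\lambda X_i} \le 1 + \lambda E X_i + \frac{\lambda^2 E X_i^2}{2(1-\lambda M/3)} .
\]
Then $1+x\le e^x$, multiplied by $e^{-\lambda E X_i}$, yields $E e^{\lambda (X_i - E X_i)} \le \exp\bigl(\lambda^2 E X_i^2 / (2(1-\lambda M/3))\bigr)$.

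\emph{Step 2 (independence and Markov).} By independence, the moment generating function of $\sum_i (X_i - E X_i)$ is the product of the individual ones. Markov's inequality applied to $e^{\lambda\sum_i(X_i-EX_i)}$ then gives
\[
\Pr\Bigl\{\sum_i (X_i - E X_i)\ge \varepsilon\Bigr\} \le \exp\Bigl(-\lambda\varepsilon + \frac{\lambda^2\theta}{2(1-\lambda M/3)}\Bigr).
\]

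\emph{Step 3 (choice of $\lambda$).} Take $\lambda = \varepsilon/(\theta + M\varepsilon/3)$. Then $\lambda M/3 < 1$, and $1-\lambda M/3 = \theta/(\theta + M\varepsilon/3)$. The quadratic term becomes $\varepsilon^2/(2(\theta+M\varepsilon/3))$ and the linear term is $-\varepsilon^2/(\theta+M\varepsilon/3)$. The exponent therefore collapses to $-\varepsilon^2/(2(\theta + M\varepsilon/3))$, as claimed. The degenerate cases need a separate check. If $\varepsilon=0$ the bound is trivial. If $\theta=0$, every $X_i=0$ almost surely, so the probability is $0$ for $\varepsilon>0$.

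The only delicate point is Step 1. There one must get the constant $3$ in the denominator from the factorial estimate $j!\ge 2\cdot 3^{j-2}$, and must avoid losing a factor by centering before bounding moments. The rest is routine algebra.
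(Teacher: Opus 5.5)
Your proof is correct. It is the textbook Cram\'er--Chernoff proof of Bernstein's inequality: the raw-moment bound $|E X_i^j|\le M^{j-2}E X_i^2$, the estimate $j!\ge 2\cdot 3^{j-2}$, independence plus Markov, and the choice $\lambda=\varepsilon/(\theta+M\varepsilon/3)$. The paper gives no proof of this statement; it only cites Petrov~\cite{petrov75}.

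One point in Step~3 should not be covered by ``as claimed.'' The statement as typed defines $\theta=\sum_i E X_i^2$ but has $\theta^2$ in the denominator, and read literally that is false. Take a single symmetric $X_1=\pm\frac12$, so $M=\frac12$ and $\sum_i E X_i^2=\frac14$, and take $\varepsilon=\frac12$. The literal right-hand side is then $e^{-6/7}\approx 0.42$, whereas $\Pr\{X_1\ge\frac12\}=\frac12$.

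What you actually prove is the bound with $\sum_i E X_i^2$ unsquared in the denominator, i.e.\ the statement with $\theta^2:=\sum_i E X_i^2$. That is clearly the intended version. It is also the form the paper relies on in Lemma~\ref{lem:bernstein_app}, where the second-moment sum enters the denominator unsquared. Say explicitly that this is the reading you are proving.
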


In the following, it is crucial to assume that the sample $u$ is independent of $\delta_r$
and $\delta_t$. We can assume this because we use centers from $A_1$ on samples from $A_2$,
and vice-versa, which are independent.

\begin{lemma}\label{lem:bernstein_app}
If $u \in D_r$ is a sample independent of $\delta_r$ and $\delta_t$, for all $t \ne r$
\begin{equation}\label{eq:bernstein_app}
|(u - \mu_r) \cdot (\delta_r + \delta_t)| \le 15ck\sigma^2 \frac{1}{w_{\min}} \left(1 + \frac{n}{m} + \log m\right) \le \frac{1}{100}\|\mu_r - \mu_s\|^2
\end{equation}
with high probability.
\end{lemma}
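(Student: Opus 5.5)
Fix $t \ne r$ and write $\delta = \delta_r + \delta_t$. Since the centers $\mu_r^*, \mu_t^*$ are computed from one half of the rows and $u$ is drawn from the other half, we condition on $\delta$ and treat it as a fixed vector. Then
\[
(u - \mu_r)\cdot\delta = \sum_{i=1}^n X_i, \qquad X_i = (u(i) - \mu_r(i))\,\delta(i),
\]
is a sum of independent, mean-zero random variables. We apply Bernstein's inequality (Theorem~\ref{thm:bernstein}) to $\sum_i X_i$ and to $-\sum_i X_i$, then take a union bound over all samples $u$ and all pairs $(r,t)$, i.e.\ over $O(mk^2)$ events. This forces a deviation threshold of order $\log m$ standard deviations.

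The two Bernstein parameters are as follows. For the variance, $E X_i^2 = \mu_r(i)(1-\mu_r(i))\delta(i)^2 \le \sigma^2\delta(i)^2$, so $\theta \le \sigma^2\|\delta\|^2$. By Lemma~\ref{lem:centers_close} and $(a+b)^2\le 2a^2+2b^2$,
\[
\|\delta\|^2 \le 2\|\delta_r\|^2 + 2\|\delta_t\|^2 \le 324\,ck\sigma^2 \frac{1}{w_{\min}}\left(1+\frac{n}{m}\right) =: \Delta^2 .
\]
Hence $\theta \le \sigma^2\Delta^2$. For the range, $|X_i| \le M := \|\delta\|_\infty$.

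Set $\varepsilon = 15ck\sigma^2\frac{1}{w_{\min}}(1+\frac{n}{m}+\log m)$. If the variance term dominates the denominator, the exponent is about $\varepsilon^2/(\sigma^2\Delta^2)$. Because $\varepsilon$ carries a factor $(1+n/m+\log m)$ and $\Delta^2$ only $(1+n/m)$, this ratio is of order $ck\sigma^2\frac{1}{w_{\min}}\log^2 m/(1+n/m)$, up to constants. This is at least a large multiple of $\log m$, using $\sigma^2 \ge \log^6 n/n$ together with the $(1+n/m)$ normalization. So the failure probability is $m^{-\omega(1)}$, enough for the union bound. The second inequality in \eqref{eq:bernstein_app} then follows directly from the separation condition \eqref{eq:separation}: $15 \cdot 100 < 8100$.

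The main obstacle is the range term $M\varepsilon/3$. Lemma~\ref{lem:centers_close} controls only $\|\delta\|_2$, and the trivial bound $\|\delta\|_\infty \le \|\delta\|_2$ is too weak in general: when $\delta$ is concentrated on few coordinates, Bernstein degenerates to a Hoeffding-type bound and the exponent becomes $\varepsilon/M$. I would first try to bound $\|\delta\|_\infty$ directly. Every coordinate of $\mu_r^*$ and $\mu_r$ lies in $[0,1]$, so $M \le 2$; in fact $M = O(\sigma^2)$ up to the small contamination from the sets $Q'_{rs}$, since entries of $A^{(k)}$ averaged over the clusters stay near $[0,\sigma^2]$. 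With $M \le 2$, the exponent $\varepsilon/M$ is already $\Omega(ck\sigma^2 \frac{1}{w_{\min}}\log m)$. This is $\gg \log m$ whenever $k\sigma^2/w_{\min}$ exceeds a constant, and the $\log m$ summand in $\varepsilon$ is exactly what is inserted to pay for this regime.

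If that bound is not enough, the fallback is to truncate. Split the coordinates into those with $|\delta(i)| \le \tau$, handled by Bernstein with $M=\tau$, and the at most $\Delta^2/\tau^2$ coordinates where $|\delta(i)|$ is large. On the large coordinates, bound the contribution crudely by $\sum_i |u(i)-\mu_r(i)||\delta(i)|$, using that $u(i)=1$ on few of them with high probability since $\mu_r(i)\le\sigma^2$. Then choose $\tau$ to balance the two pieces.
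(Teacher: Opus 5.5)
There is a genuine gap, and it sits exactly where you locate the main obstacle: you never get a usable bound on the range $M=\|\delta\|_\infty$.

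\textbf{Why your bounds on $M$ fail.} With $M\le 2$, the range term dominates the Bernstein denominator. The exponent is then only about $\frac{3\varepsilon}{2M}=\Theta(ck\sigma^2 L/w_{\min})$, where $L=1+\frac{n}{m}+\log m$. This tends to $0$ in the sparse regime the model is built for (e.g.\ $\sigma^2=\log^6 n/n$, $w_{\min}=1/k$, $m=n$). Your condition ``$k\sigma^2/w_{\min}$ exceeds a constant'' is simply not available there.

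The truncation fallback cannot repair this, because the $\ell_2$ bound of Lemma~\ref{lem:centers_close} alone does not imply the statement. Take a $\delta$ whose whole $\ell_2$ mass $\Delta$ sits on one coordinate $i$ with $\mu_r(i)=\sigma^2$. For small $\sigma$ this is consistent with that lemma and with $\mu_r^*\in[0,1]^n$. Yet $(u-\mu_r)\cdot\delta\approx\Delta\gg\varepsilon$ whenever $u(i)=1$, i.e.\ with probability $\sigma^2$, which is far too large for a union bound over $m$ samples. So entrywise information about $\delta$ is indispensable.

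Your heuristic that the entries ``stay near $[0,\sigma^2]$'' is unproved and not quite right. It conflates averaging rows of $A^{(k)}$, whose entries need not even lie in $[0,1]$, with averaging the original $0/1$ samples, which is what the proof of Lemma~\ref{lem:centers_close} uses. Also, the misclassified samples in the $Q'_{rs}$ could all carry a $1$ in the same coordinate, so the contamination need not be small coordinatewise.

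\textbf{The missing idea: a counting bound.} This is what the paper uses. The centers $\mu_r^*$ are averages of original $0/1$ samples over $P_r$, with $|P_r|\ge\frac45 m_r\ge\frac45 w_{\min}m$. With high probability (Chernoff plus a union bound over coordinates), every coordinate has at most $1.1m\sigma^2$ ones among all $m$ samples. Hence $0\le\mu_r^*(i)\le 2\sigma^2/w_{\min}$, and so $|\delta_r(i)|\le 2\sigma^2/w_{\min}$. With $M=O(\sigma^2/w_{\min})$, the factor $\sigma^2/w_{\min}$ cancels in $\varepsilon/M$, leaving an exponent $\Theta(ckL)\ge\Theta(ck\log m)$. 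That cancellation is the real reason for the $\log m$ summand in $\varepsilon$.

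\textbf{A separate arithmetic error.} Your variance estimate is miscomputed. In fact $\varepsilon^2/(\sigma^2\Delta^2)=\frac{225}{324}\cdot\frac{ck}{w_{\min}}\cdot\frac{L^2}{1+n/m}$ carries no residual $\sigma^2$, since the $\sigma^4$ cancels. It is at least $\frac{225}{324}ck\log m$ simply because $L\ge 1+\frac{n}{m}$ and $w_{\min}\le 1$. Your appeal to $\sigma^2\ge\log^6 n/n$ would not have given this: for $m=n$ and $\sigma^2=\log^6 n/n$, your expression tends to $0$. So that step's conclusion holds, but not for the reason you give.

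Your derivation of the second inequality from \eqref{eq:separation} via $15\cdot 100<8100$ is fine.
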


\begin{proof}
It suffices to prove a bound on $(u - \mu_r) \cdot \delta_r$. The case for $\delta_t$ is similar.
As
\begin{align*}
(u - \mu_r) \cdot \delta_r &= \sum_{i \in [n]} (u(i) - \mu_r(i))\delta_r(i) \\
&= \sum_{i \in [n]} x(i)
\end{align*}
where $x(i) = (u(i) - \mu_r(i))\delta_r(i)$. This is a sum of independent random variables.
Note that by Lemma~\ref{lem:centers_close}
\[
\|\delta_r\|^2 \le 81ck\sigma^2 \frac{1}{w_{\min}} \left(1 + \frac{n}{m}\right)
\]

Now,
\begin{align*}
E(x(i)) &= E(u(i) - \mu_r(i))\delta_r(i) = 0 \\
E(x(i)^2) &\le 2\delta_r(i)^2 \sigma^2 \\
\sum_i E(x(i)^2) &\le 2\sigma^2 \|\delta_r\|^2 \le 162ck\sigma^4 \frac{1}{w_{\min}} \left(1 + \frac{n}{m}\right)
\end{align*}

Also note that $|x(i)| \le |\delta_i| \le \frac{2\sigma^2}{w_{\min}}$. This is simply because the number of 1's in a
column can be at most $1.1m\sigma^2$. Hence $|\delta_i| \le \frac{1}{p_r}1.1m\sigma^2 \le \frac{2\sigma^2}{w_{\min}}$.

We are ready to apply Bernstein's inequality, using which we get,
\[
\Pr\left\{\left|\sum_{i \in [n]} x(i)\right| \ge 15ck\sigma^2 \frac{1}{w_{\min}} \left(1 + \frac{n}{m} + \log m\right)\right\}
\]
\[
\le \exp\left(\frac{-225c^2k^2\sigma^4 \frac{1}{w_{\min}^2}\left(1 + \frac{n}{m} + \log m\right)^2}{162ck\sigma^4\frac{1}{w_{\min}}\left(1 + \frac{n}{m}\right) + 30ck\frac{\sigma^4}{w_{\min}^2}\left(1 + \frac{n}{m} + \log m\right)}\right)
\le \frac{1}{m^4} \qedhere
\]
\end{proof}

The correctness of the algorithm follows:

\begin{theorem}\label{thm:main}
$\textsc{Cluster}(A, k)$ successfully clusters the matrix $A$.
\end{theorem}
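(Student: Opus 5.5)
The proof assembles the preceding lemmas, handling the two halves $A_1,A_2$ symmetrically. First I will condition on the high-probability events already established. These are the spectral bound of Theorem~\ref{thm:vu} and the Frobenius bound of Lemma~\ref{lem:mcsherry}, applied to each half $A_1$ and $A_2$ separately. Each half is itself an instance of the model, with roughly $m/2$ samples and the same weights up to lower-order fluctuations, so the separation condition \eqref{eq:separation} still applies with constants absorbed. Under these events, Lemma~\ref{lem:centers_clustering} and Lemma~\ref{lem:centers_close} give that the centers $\theta_1,\ldots,\theta_k$ returned by $\textsc{Centers}(A_1,k)$ satisfy $\|\theta_r-\mu_r\|^2\le \frac{1}{20}\|\mu_r-\mu_s\|^2$ for all $s\ne r$, after a suitable relabeling. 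The same holds for the $\nu_r$ computed from $A_2$. The important point is that the $\theta_r$ are functions of $A_1$ only, and therefore independent of every sample in $A_2$. Likewise the $\nu_r$ are independent of every sample in $A_1$.

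Next, fix a sample $u\in A_2$ with $u\in T_r$ and any $t\ne r$. I will compare the two quantities tested in $\textsc{Project}$, namely $|(u-\theta_r)\cdot(\theta_t-\theta_r)|$ and $|(u-\theta_t)\cdot(\theta_r-\theta_t)|$. Lemma~\ref{lem:project}, whose independence hypothesis is supplied by Lemma~\ref{lem:bernstein_app}, bounds the first by $\frac25\|\mu_r-\mu_t\|^2$. For the second I use the identity
\[
(u-\theta_t)\cdot(\theta_r-\theta_t) = (u-\theta_r)\cdot(\theta_r-\theta_t) + \|\theta_r-\theta_t\|^2 .
\]
Since $\|\theta_r-\theta_t\|\ge \|\mu_r-\mu_t\|(1-2/\sqrt{20})$, we get $\|\theta_r-\theta_t\|^2 \ge 0.3\,\|\mu_r-\mu_t\|^2$. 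Together with the first bound, this gives $|(u-\theta_t)\cdot(\theta_r-\theta_t)|\ge (0.3-0.4)\|\mu_r-\mu_t\|^2$, which is not enough.

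So the main obstacle is the constants: the $\frac25$ in Lemma~\ref{lem:project} is too loose. I will redo the term-by-term bound of Lemma~\ref{lem:project} directly on the difference of the two test quantities. The difference equals $\|\theta_r-\theta_t\|^2 - 2(u-\theta_r)\cdot(\theta_t-\theta_r)$ up to sign, and it suffices to show it is positive. Writing $\theta=\mu+\delta$, the dominant term is $\|\mu_r-\mu_t\|^2$. The terms $\delta$ against $\mu_r-\mu_t$ and $\delta$ against $\delta$ are small by Lemma~\ref{lem:centers_close}; if the $1/20$ is not sharp enough, I will tighten it by using the slack in the constant $8100$, which makes $\|\delta_r\|^2\le 81ck\sigma^2/w_{\min}(\cdots)=\frac{1}{100}\|\mu_r-\mu_t\|^2$. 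The random term $(u-\mu_r)\cdot(\mu_r-\mu_t)$ is at most $\frac1{10}\|\mu_r-\mu_t\|^2$ by \eqref{eq:deviation}, and $(u-\mu_r)\cdot(\delta_r-\delta_t)$ is at most $\frac1{100}\|\mu_r-\mu_t\|^2$ by Lemma~\ref{lem:bernstein_app}. With $\|\delta\|\le \frac1{10}\|\mu_r-\mu_t\|$, the total error is at most roughly $(0.2+0.4+0.04+0.02)\|\mu_r-\mu_t\|^2<\|\mu_r-\mu_t\|^2$. Hence the test in $\textsc{Project}$ strictly prefers $r$ over $t$, and $u$ is placed in $P_r^2$.

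Finally, I will take a union bound over all $m$ samples and $k-1$ competitors $t$, each failing with probability at most $O(1/m^4)$, so every sample in both halves is correctly classified with probability $1-o(1)$. It then remains to justify the matching step of $\textsc{Cluster}$. Since $\|\theta_r-\nu_r\|\le 2\|\mu_r-\mu_s\|/\sqrt{20}$, while $\theta_r$ and $\nu_s$ are far apart for $s\ne r$, pairing each $\theta_r$ with its nearest $\nu$ recovers the correct permutation. The unions $P_r^1\cup P_r^2$ are therefore exactly the sets $T_r$.
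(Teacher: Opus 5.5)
Your argument is correct and shares the paper's skeleton: independence via the split, Lemma~\ref{lem:centers_close}, the identity $a+b=L$, then \eqref{eq:deviation} and Lemma~\ref{lem:bernstein_app}. Here $a=(u-\theta_r)\cdot(\theta_t-\theta_r)$, $b=(u-\theta_t)\cdot(\theta_r-\theta_t)$ and $L=\|\theta_r-\theta_t\|^2$. Your proof differs only in how the last inequality is closed, and there it repairs a real slip in the paper.

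The paper asserts $\|\mu_r^*-\mu_t^*\|^2\ge 0.95\|\mu_r-\mu_t\|^2$ and subtracts the $\frac{2}{5}$ of Lemma~\ref{lem:project}. As you noticed, that lower bound does not follow. The triangle inequality gives about $0.31$ from the $\frac{1}{20}$ bound and only $0.64$ from the sharper $\frac{1}{100}$ one, and $0.64-0.4<0.4$. So the paper's route closes only after also redoing Lemma~\ref{lem:project} with $\|\delta\|\le\frac{1}{10}\|\mu_r-\mu_t\|$, which gives $|a|\le 0.23$ and $|b|\ge 0.41$ in units of $\|\mu_r-\mu_t\|^2$.

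You instead bound $b-a=L-2a$ directly. The one line your ``it suffices to show it is positive'' needs is $b^2-a^2=(b+a)(b-a)=L(L-2a)$ with $L>0$.

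Your tally is slightly off. Bounded separately, the terms $2(\mu_r-\mu_t)\cdot(\delta_r-\delta_t)$ and $-2\delta_r\cdot(\mu_r-\mu_t)$ contribute $0.6$, not $0.4$. The total is then $0.86\|\mu_r-\mu_t\|^2$, still below $\|\mu_r-\mu_t\|^2$, so the conclusion stands.

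A cleaner bookkeeping is to write $L-2a=\|u-\theta_t\|^2-\|u-\theta_r\|^2$. Those cross terms then collapse to $-2(\mu_r-\mu_t)\cdot\delta_t+\|\delta_t\|^2-\|\delta_r\|^2$. This gives $L-2a\ge 0.57\|\mu_r-\mu_t\|^2$ under the $\frac{1}{10}$ bound, and still $\ge 0.28\|\mu_r-\mu_t\|^2$ under the original $\frac{1}{20}$ bound, so the $8100$ slack is not even needed. That margin also comfortably absorbs the constant losses you wave through when passing to halves with $m/2$ samples, which the uncancelled tally, with only $0.14$ to spare, may not.
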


\begin{proof}
It suffices to show that $\textsc{Project}(A, \mu_1^*, \ldots, \mu_k^*)$ works. Merging the clusters
from two calls of $\textsc{Project}$ is easy just by comparing the respective centers.

Now, let $u \in T_r$. For all $t \ne r$,
\begin{align*}
&(v - \mu_r^*) \cdot (\mu_t^* - \mu_r^*) + (v - \mu_t^*) \cdot (\mu_r^* - \mu_t^*) \\
&= (\mu_r^* - \mu_t^*) \cdot (v - \mu_t^* - v + \mu_r^*) \\
&= (\mu_r^* - \mu_t^*) \cdot (-\mu_t^* + \mu_r^*) \\
&= \|\mu_r^* - \mu_t^*\|^2
\end{align*}

By Lemma~\ref{lem:centers_close}
\[
\|\mu_r^* - \mu_t^*\|^2 \ge 0.95\|\mu_r - \mu_t\|^2
\]
\[
\Rightarrow \quad (v - \mu_r^*) \cdot (\mu_t^* - \mu_r^*) + (v - \mu_t^*) \cdot (\mu_r^* - \mu_t^*) \ge 0.95\|\mu_r - \mu_t\|^2
\]
\[
\Rightarrow \quad |(v - \mu_r^*) \cdot (\mu_t^* - \mu_r^*)| + |(v - \mu_t^*) \cdot (\mu_r^* - \mu_t^*)| \ge 0.95\|\mu_r - \mu_t\|^2
\]

As $|(v - \mu_r^*) \cdot (\mu_t^* - \mu_r^*)| \le 0.4\|\mu_r - \mu_t\|^2$ by Lemma~\ref{lem:project}
\[
|(v - \mu_t^*) \cdot (\mu_r^* - \mu_t^*)| \ge 0.55\|\mu_r - \mu_t\|^2 > |(v - \mu_r^*) \cdot (\mu_t^* - \mu_r^*)|
\]
\end{proof}
This proves our claim.

As mentioned in the introduction, our analysis extends beyond the case of Bernoulli
distributions. Bernstein's inequality works for subgaussian distributions \cite{petrov75}, and
similar bounds are available for vectors with limited independence as well (e.g.\ \cite{schmidt95}).
Given these bounds, all we need to complete the proof for these cases is a bound on
the spectral norm of $\|A - E(A)\|$ for such distributions, which is also available (e.g.\
\cite{dasgupta07, rudelson99}).


\begin{thebibliography}{99}

\bibitem{achlioptas05}
D.~Achlioptas and F.~McSherry.
On spectral learning of mixtures of distributions.
In \emph{Conference on Learning Theory (COLT)}, pages 458--469, 2005.

\bibitem{alon97}
N.~Alon and N.~Kahale.
A spectral technique for coloring random 3-colorable graphs.
\emph{SIAM J. Comput.}, 26(6):1733--1748, 1997.

\bibitem{alon98}
N.~Alon, M.~Krivelevich, and B.~Sudakov.
Finding a large hidden clique in a random graph.
In \emph{Annual ACM-SIAM Symposium on Discrete Algorithms (SODA)}, pages 594--598, 1998.

\bibitem{boppana87}
R.~Boppana.
Eigenvalues and graph bisection: an average case analysis.
In \emph{IEEE Symposium on Foundations of Computer Science (FOCS)}, pages 280--285, 1987.

\bibitem{dasgupta07}
A.~Dasgupta, J.~Hopcroft, R.~Kannan, and P.~Mitra.
Spectral clustering with limited independence.
In \emph{Annual ACM-SIAM Symposium on Discrete Algorithms (SODA)}, pages 1036--1045, 2007.

\bibitem{dasgupta04}
A.~Dasgupta, J.~Hopcroft, and F.~McSherry.
Spectral analysis of random graphs with skewed degree distributions.
In \emph{IEEE Symposium on Foundations of Computer Science (FOCS)}, pages 602--610, 2004.

\bibitem{drineas99}
P.~Drineas, A.~Frieze, R.~Kannan, S.~Vempala, and V.~Vinay.
Clustering in large graphs and matrices.
In \emph{Annual ACM-SIAM Symposium on Discrete Algorithms (SODA)}, pages 291--299, 1999.

\bibitem{jain01}
K.~Jain and V.~Vazirani.
Approximation algorithms for metric facility location and median problems using the primal-dual schema and lagrangian relaxation.
\emph{J. ACM}, 48(2):274--296, 2001.

\bibitem{kannan05spectral}
R.~Kannan, H.~Salmasian, and S.~Vempala.
The spectral method for general mixture models.
In \emph{Conference on Learning Theory (COLT)}, pages 444--457, 2005.

\bibitem{kanungo04}
T.~Kanungo, D.~Mount, N.~Netanyahu, C.~Piatko, R.~Silverman, and A.~Wu.
A local search approximation algorithm for k-means clustering.
\emph{Comput. Geom.}, 28(2-3):89--112, 2004.

\bibitem{mcsherry01}
F.~McSherry.
Spectral partitioning of random graphs.
In \emph{IEEE Symposium on Foundations of Computer Science (FOCS)}, pages 529--537, 2001.

\bibitem{petrov75}
V.~Petrov.
\emph{Sums of independent random variables}.
Springer, 1975.

\bibitem{rudelson99}
M.~Rudelson.
Random vectors in the isotropic position.
\emph{J. of Functional Analysis}, 168(1):60--72, 1999.

\bibitem{schmidt95}
J.~Schmidt, A.~Siegel, and A.~Srinivasan.
Chernoff-Hoeffding bounds for applications with limited independence.
\emph{SIAM J. Discret. Math.}, 8(2):223--250, 1995.

\bibitem{vempala04}
S.~Vempala and G.~Wang.
A spectral algorithm for learning mixture models.
\emph{Journal of Computer and System Sciences}, 68(4):841--860, 2004.

\bibitem{vu05}
V.~Vu.
Spectral norm of random matrices.
In \emph{ACM Symposium on Theory of computing (STOC)}, pages 619--626, 2005.

\bibitem{mitra08thesis}
P.~Mitra.
\emph{Clustering algorithms for random and pseudo-random structures}.
PhD thesis, Yale University, 2008.

\bibitem{neumann18}
S.~Neumann.
Bipartite stochastic block models with tiny clusters.
In \emph{Advances in Neural Information Processing Systems (NeurIPS)}, 2018.

\bibitem{seifchen24}
M.~Seif and Y.~Chen.
Clustering mixtures of discrete distributions: a note on {M}itra's algorithm.
arXiv preprint arXiv:2405.19559, 2024.

\end{thebibliography}
\end{document}